\newtheorem{theorem}{Theorem}
\newtheorem{lemma}[theorem]{Lemma}
\theoremstyle{definition}
\theoremstyle{remark}
\newcommand{\Loop}{\ell}
\newcommand{\ue}{\bar {e}}
\newcommand{\Id}{\mathrm{Id}}
\newcommand{\trace}{\text{tr}}
\newcommand{\FinG}{\mathcal{G}}
\newcommand{\Walk}{\omega}		
\newcommand{\Walks}{\mathcal{W}}
\newcommand{\Loops}{\mathcal{L}}
\newcommand{\uLoop}{\ell^{\circ}}
\newcommand{\polsim}{\sim}
\newcommand{\rw}{w}
\newcommand{\subjclass}[2][2010]{%
  \let\@oldtitle\@title%
  \gdef\@title{\@oldtitle\footnotetext{#1 \emph{Mathematics Subject Classification.} #2}}%
}
\title{A short proof of the Kac--Ward formula}
\author{
Marcin Lis
 \thanks{Mathematical Sciences, Chalmers University of Technology and the University of Gothenburg, SE-412 96 G\"oteborg, Sweden; {e-mail:} \nolinkurl{marcinl@chalmers.se}}
 \thanks{ ICERM, Brown University, 121 South Main Street, RI 02903 Providence, USA; {e-mail:} \nolinkurl{marcin_lis@brown.edu} } 
 }
\date{}
\subjclass{05A15, 82B20.}
\begin{document}

\maketitle

\begin{abstract}  We present a new short proof of the Kac--Ward formula for the partition function of the
Ising model on planar graphs.
\end{abstract}

Let $\FinG=(V,E)$ be a finite graph embedded in the complex plane with non-intersecting edges drawn as straight line segments.
For a directed edge $e=(t_e,h_e)$, its reversal is $-e=(h_e,t_e)$, and its undirected version is $\ue=\{ t_e,h_e\} \in E$.
For two directed edges $e,g$, the \emph{turning angle} from  $e$ to $g$ is \begin{align*} 
\angle(e,g)= \text{Arg}\Big(\frac{h_g-t_g}{h_e-t_e}\Big) \in (-\pi,\pi]
\end{align*}
(see Figure~\ref{fig}).
Let $x=(x_{\ue})_{\ue \in E}$ be a vector of real edge weights.
The \emph{transition matrix} is a matrix indexed by the directed edges and is given by
\begin{align*}
\Lambda_{e,g} = \begin{cases}
		x_{\ue} e^{\frac{i}{2}\angle(e,g)}
		& \text{if $h_{e} = t_{g}$ and $g \neq -e$}; \\
		0 & \text{otherwise}.
	\end{cases}
\end{align*}
An \emph{even subgraph} is a set $H\subset E$ such that the degree of each vertex of $(V,H)$ is even. Let
\[
Z = \sum_{H \text{ even}}  \ \prod_{\ue\in H}x_{\ue}
\]
be the generating function of even subgraphs, where the product over the empty set is taken to be $1$. 
If $x_{\ue} \in(0,1)$, then $Z$ is
the partition function of the Ising model~\cite{Ising} defined on $\FinG$. We refer the reader to~\cite{KLM} for more details
on the connection with the Ising model.
The main result of this note is a short proof of the following theorem.
\begin{theorem}[Kac--Ward formula]\label{thm}
\begin{align}\label{eq:KacWard}
{\det}(\Id -\Lambda) =Z^2,
\end{align}\
where $\Id$ is the identity matrix.
\end{theorem}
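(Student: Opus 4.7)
The plan is to pass from the determinant to a weighted product over primitive closed loops, simplify the phases using Whitney's turning formula, and then identify that product with $Z^{2}$ combinatorially. I would begin with
\[
\log\det(\Id-\Lambda)=\mathrm{tr}\log(\Id-\Lambda)=-\sum_{n\ge1}\frac{1}{n}\mathrm{tr}(\Lambda^{n}),
\]
and expand $\mathrm{tr}(\Lambda^n)$ as a sum over rooted closed walks $(e_1,\dots,e_n)$ on $\FinG$ of length $n$. The constraint $g\neq -e$ in the definition of $\Lambda$ makes these walks \emph{non-backtracking}, and each contributes $\prod_i x_{\bar{e}_i}\,\exp\bigl(\tfrac{i}{2}\sum_i\angle(e_i,e_{i+1})\bigr)$, with indices taken cyclically.

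Next I would use the standard cyclic bookkeeping: every rooted closed walk of length $n$ is the $k$-th power of a unique primitive loop of length $p=n/k$, and an unrooted primitive loop of length $p$ admits exactly $p$ rooted representatives. Since the walk weight is multiplicative under taking powers, collecting terms gives
\[
\det(\Id-\Lambda)=\exp\Bigl(-\!\sum_{n\ge1}\tfrac{1}{n}\mathrm{tr}(\Lambda^{n})\Bigr)=\prod_{[\ell]}\bigl(1-w(\ell)\bigr),
\]
the product running over cyclic equivalence classes of primitive non-backtracking closed loops $\ell$ in $\FinG$, with $w(\ell)$ the weight of one traversal. Whitney's rotation formula, applied to the piecewise-linear closed curve traced out by $\ell$, asserts that the total turning is congruent to $2\pi(1+c(\ell))$ modulo $4\pi$, where $c(\ell)$ counts the transverse self-intersections of the straight-line drawing of $\ell$. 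Hence $w(\ell)=-(-1)^{c(\ell)}x(\ell)$ with $x(\ell)=\prod_i x_{\bar{e}_i}$, and consequently
\[
\det(\Id-\Lambda)=\prod_{[\ell]}\bigl(1+(-1)^{c(\ell)}x(\ell)\bigr).
\]

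The remaining, most substantial step is the purely combinatorial identification of this loop-product with $Z^{2}$. Expanded as a polynomial in the $x_{\bar{e}}$, the right-hand side of \eqref{eq:KacWard} is a sum over ordered pairs $(H_1,H_2)$ of even subgraphs in which each edge variable appears to power at most two, whereas the left-hand side is an alternating sum over finite families of distinct primitive loops that a priori involves edges traversed arbitrarily many times. To match the two I would aim to construct a sign-reversing involution on those loop families whose induced edge-multiset does not arise from a pair of even subgraphs, using the Whitney signs $(-1)^{c(\ell)}$ to enforce cancellation at self-crossings and at vertices shared by distinct loops. Pinning down such an involution canonically — this is essentially the content of Sherman's identity — is the delicate combinatorial point, and is where I expect the bulk of the paper's new idea to lie.
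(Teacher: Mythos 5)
Your opening steps---the trace--log expansion and the regrouping of rooted closed walks into powers of primitive unrooted loops, giving $\det(\Id-\Lambda)=\prod_{[\ell]}(1-w(\ell))$---are correct and amount to Lemma~\ref{lem:loopexpansion} together with the rooted/unrooted bookkeeping that the paper performs inside Lemma~\ref{lem:generic}. But the remainder of the proposal has two genuine gaps. First, the sign formula $w(\ell)=-(-1)^{c(\ell)}x(\ell)$ via Whitney's rotation theorem is only justified when the straight-line drawing of $\ell$ is a generic immersed closed curve; a non-backtracking loop may traverse an edge several times (in either direction) and pass through a vertex many times, in which case the curve is not in general position, $c(\ell)$ is not well defined without a perturbation convention, and the congruence mod $4\pi$ requires a separate argument. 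This is exactly the point where the original Kac--Ward treatment went wrong. Second, and more seriously, the identification of $\prod_{[\ell]}\bigl(1+(-1)^{c(\ell)}x(\ell)\bigr)$ with $Z^{2}$ is not carried out: you explicitly defer to a sign-reversing involution that is ``essentially the content of Sherman's identity'' without constructing it. On your route that involution \emph{is} the theorem, so the proof is incomplete as written.

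The paper circumvents both difficulties by a different mechanism, which you should compare with your plan. Lemma~\ref{lem:generic} (a loop-erased-walk identity) shows that $\exp\bigl(-w\{\Loop \text{ visits } e \text{ and not } -e\}\bigr)$ is \emph{linear} in $x_{\ue}$, and Lemma~\ref{lem:specific} kills all loops visiting both $e$ and $-e$ by an elementary involution that reverses the sub-walk between the first visit to $\{e,-e\}$ and the last visit to its reversal, using the sign flip of property~\eqref{prop2}. Together with $w(\Loop)=w(\Loop^{-1})$ this exhibits $\det(\Id-\Lambda)$ as the square of a polynomial multilinear in $x$ (in the trivalent case, to which the general case is reduced by vertex decoration). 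One then only needs to match multilinear coefficients, and for those only self-avoiding loops contribute; for a self-avoiding loop the total turning is exactly $\pm 2\pi$ as the sum of exterior angles of a simple polygon, so no Whitney-type statement for self-intersecting curves and no Sherman identity are needed. To complete your own argument you would have to either prove Sherman's cancellation lemma in full generality (including the degenerate intersections), or switch to this squaring argument.
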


 \begin{figure}
		\begin{center}
			\includegraphics{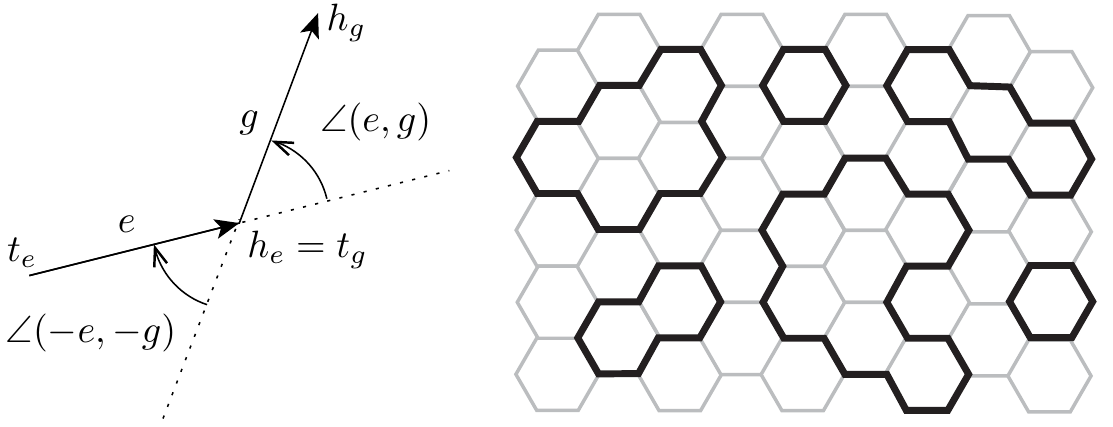}  
		\end{center}
	\caption{The turning angle and an even subgraph of the hexagonal lattice}
	\label{fig}
\end{figure}

Many papers appeared in the physics and mathematics literature where the Kac--Ward formula is proved or claimed to be proved. 
The original proof of Kac and Ward~\cite{KacWard} famously contained an error. Subsequently, several papers appeared where attempts were made to fix it.
We mention the contributions of Sherman~\cite{Sherman1}, Burgoyne~\cite{Burgoyne}, and Vdovichenko~\cite{Vdovichenko1},
where loop expansions of the determinant were used. However, these papers still left a lot to wish for in terms of mathematical rigour.
In the light of the accessible computation of the partition function due to Kasteleyn~\cite{Kasteleyn1}, who expressed it as the partition function of a dimer model on a decorated graph, the combinatorics involved in the Kac--Ward formula seemed unnecessarily  
complicated.
This was probably the reason why its first rigorous proof was given only much later by Dolbilin et al.\ \cite{DolEtAl}. 
In recent years more rigorous but still combinatorially involved proofs appeared~\cites{Cimasoni,KLM,Helmuth}.
We refer the reader to~\cite{KLM} for a longer discussion on the history of this theorem. We also need to mention that there exists a short proof due to Chelkak, Cimasoni and Kassel, who 
discovered it while investigating the double Ising model~(private communication). 

In this note a new short proof based on the loop expansion of the determinant is presented. Like all the previous proofs it relies on cancellations between certain weighted combinatorial objects.
In our case, these objects are \emph{loops}. The main improvement in comparison with~\cite{KLM} is that there is no need for expanding the generating functions
into generating functions of \emph{collections of loops}, which complicates the picture.
The cancellations of loop weights fall into two categories: generic and specific (Lemma~\ref{lem:generic} and Lemma~\ref{lem:specific} respectively).
The generic cancellations follow from the general theory of loop-erased walks. 
The specific cancellations are an easy consequence of the unique sign-changing property of the weights induced by the transition matrix.
The combinatorial mechanism of the Kac--Ward formula is therefore as transparent as the one of the loop-erased walk.

\section*{Walks and loops}
A \emph{(non-backtracking) walk} $\Walk$
of length $|\Walk|=n$ is a sequence of directed edges $\Walk=(\Walk_1,\ldots,\Walk_{n+1})$
 such that $t_{\Walk_{i+1}}=h_{\Walk_i}$ and $\Walk_{i+1} \neq -\Walk_i$ for $1\leq i\leq n$. Note that the length of a walk is the number of steps
 the walk makes between the edges rather than the number of edges itself.  By $\Walks_{e,g}$ we denote the set of all walks starting at $e$ and ending at $g$. 
 For $\Walk \in \Walks_{e,g}$ and $\Walk'\in \Walks_{g,h}$,
\[
\Walk \oplus \Walk'=(\Walk_1,\ldots,\Walk_{|\Walk|+1},\Walk'_2,\ldots, \Walk'_{|\Walk'|+1}) \in \Walks_{e,h}
\]
is the concatenation of $\Walk$ and $\Walk'$, and $\Walk^{-1}=(-\Walk_{|\Walk|},\ldots, -\Walk_1) \in \Walks_{-g,-e}$
is the reversal of $\Walk$. If $1\leq k \leq l \leq |\Walk|+1$, then $\Walk_{k,l} =(\Walk_k,\ldots,\Walk_l)$. \emph{Loops} are walks of length larger than $1$ and belonging to $\Walks_{e,e}$ for some directed edge $e$, and are denoted by $\Loop$. 
A loop $\Loop$ is \emph{self-avoiding} if each vertex appears in exactly two edges of $\Loop_{1,|\Loop|}$.
 
The transition matrix induces complex-valued weights on walks:
\[
\lambda(\Walk) = \prod_{i=1}^{|\Walk|}\Lambda_{\Walk_i,\Walk_{i+1}}= e^{\frac i2 \alpha(\Walk)} x(\Walk),
\]
where
\[
 x(\Walk)=\prod_{i=1}^{|\Walk|}x_{\overline{\Walk_i}}, \quad \text{and} \quad \alpha(\Walk)=\sum_{i=1}^{|\Walk|} \angle(\Walk_{i},\Walk_{i+1}).
\]
Note that the weight of the last edge is not included in the weight of the walk. 
The crucial properties of these weights are stated below.
\begin{lemma}
\begin{align}
 &\lambda( \Walk \oplus \Walk') =\lambda(\Walk)\lambda(\Walk') \qquad  &\text{for } \Walk \in \Walks_{e,g} \text{ and } \Walk' \in \Walks_{g,h}, \tag{\emph{i}}\\
 &\lambda(\Walk)=  -\lambda(\Walk^{-1})=\pm i x(\Walk) \quad\qquad &\text{for } \Walk \in \Walks_{e,-e}, \label{prop2} \tag{\emph{ii}} \\
&\lambda(\Loop) = \lambda(\Loop^{-1}) = \pm x(\Loop)\quad \qquad &\text{for a loop } \Loop \label{prop3} \tag{\emph{iii}} ,\\
&\lambda(\Loop) =- x(\Loop) \qquad &\text{for a self-avoiding loop } \Loop \label{prop4} \tag{\emph{iv}}.
\end{align}
\end{lemma}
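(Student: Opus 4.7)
The strategy is to treat the four properties as progressively more geometric refinements of the definition of $\lambda$. Property (i) is immediate: unfolding $\lambda$ as a product of transition matrix entries, both sides equal the same product indexed by consecutive pairs in the concatenated sequence, with the ``bridging'' factor $\Lambda_{\Walk_{|\Walk|+1},\Walk'_2} = \Lambda_{\Walk'_1,\Walk'_2}$ of $\lambda(\Walk\oplus\Walk')$ appearing as the first factor of $\lambda(\Walk')$.

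For (ii) and (iii), I would read $e^{i\alpha(\Walk)/2}$ as a half-argument tracking the tangent direction along $\Walk$. Since each $\angle(\Walk_i,\Walk_{i+1})$ is a branch of the argument of the ratio of consecutive edge directions, the sum $\alpha(\Walk)$ telescopes modulo $2\pi$ to $\text{Arg}\bigl((h_{\Walk_{n+1}}-t_{\Walk_{n+1}})/(h_{\Walk_1}-t_{\Walk_1})\bigr)$. For $\Walk\in\Walks_{e,-e}$ this ratio is $-1$ so $\alpha\equiv\pi\pmod{2\pi}$, while for $\Loop\in\Walks_{e,e}$ it is $1$ so $\alpha\equiv 0\pmod{2\pi}$; this pins $e^{i\alpha/2}$ down to $\pm i$ and $\pm 1$ respectively, yielding $\lambda(\Walk)=\pm i\,x(\Walk)$ and $\lambda(\Loop)=\pm x(\Loop)$. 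For the reversal identities, a direct computation gives $\angle(-g,-e)=\angle(g,e)$, and the non-crossing hypothesis on the embedding forbids $\angle(e,g)=\pi$ along a walk (two distinct edges sharing a vertex and continuing along the same ray would necessarily overlap), so $\angle(g,e)=-\angle(e,g)$ and hence $\alpha(\Walk^{-1})=-\alpha(\Walk)$ exactly. Since $\overline{\Walk_1}=\overline{\Walk_{n+1}}$ in both settings, $x(\Walk)=x(\Walk^{-1})$, and therefore $\lambda(\Walk)\lambda(\Walk^{-1})=x(\Walk)^2$. Comparing with $\lambda(\Walk)^2=-x(\Walk)^2$ yields $\lambda(\Walk^{-1})=-\lambda(\Walk)$ in (ii), and comparing with $\lambda(\Loop)^2=x(\Loop)^2$ yields $\lambda(\Loop^{-1})=\lambda(\Loop)$ in (iii).

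For (iv), a self-avoiding loop traces a simple closed polygon in the plane, so by Hopf's Umlaufsatz (the polygonal turning tangent theorem) $\alpha(\Loop)=\pm 2\pi$ exactly, and thus $e^{i\alpha(\Loop)/2}=-1$ and $\lambda(\Loop)=-x(\Loop)$. I expect the main obstacle to be the branch ambiguity of $\text{Arg}$: the telescoping of turning angles holds only modulo $2\pi$, so the sharp sign statements in (ii)--(iv) rely crucially on the geometric inputs---the non-crossing of edges for (ii), (iii) and Hopf's theorem for (iv).
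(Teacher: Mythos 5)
Your proof is correct and follows essentially the same route as the paper: property (i) from unfolding the product, the telescoping of turning angles modulo $2\pi$ for (ii) and (iii), and the sum of exterior angles of a simple closed polygon for (iv). The extra care you take in ruling out $\angle(e,g)=\pi$ via the non-crossing straight-line embedding is a worthwhile detail that the paper leaves implicit in its bare assertion that $\alpha(\Walk)=-\alpha(\Walk^{-1})$, but it is the same underlying argument.
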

\begin{proof}
Multiplicativity follows from the definition. To obtain~\eqref{prop2} and~\eqref{prop3}, recall that $\text{Arg}(z/w) =\text{Arg}(z)-\text{Arg}(w) \text{ (mod $2\pi $)}$ 
for any $z,w \neq 0$, and therefore
$\alpha(\Walk) = \angle(e,g) \ (\text{mod } 2\pi)$ for $\Walk \in \Walks_{e,g}$.  It is now enough to notice that $\alpha(\Walk)=-\alpha(\Walk^{-1})$ for any walk $\Walk$.
Finally, if $\Loop$ is self-avoiding, then $\alpha(\Loop)$ is the sum of the exterior angles of the polygon defined by~$\Loop$.
Hence, $\alpha(\Loop)=\pm 2\pi$, and~\eqref{prop4} follows.
\end{proof}
 
Before proving the theorem, we need to define a few more notions. A loop~$\Loop$ is \emph{rooted} at $e$ if $\Loop \in \Walks_{e,e}$.
The \emph{signed measure} of a loop $\Loop$  is given by
\[
\rw(\Loop) = \frac{\lambda(\Loop)}{|\Loop|}.
\]
\emph{Unrooted loops} are equivalence classes of loops under
the cyclic shift relation $\Loop\sim \Loop_{i,|\Loop|} \oplus \Loop_{1,i}$,
and are denoted by $\uLoop$. 
With a slight abuse of notation, if $f$ is a function defined on loops which is invariant under cyclic shifts, then $f(\uLoop)$ is the evaluation of $f$ at any representative of~$\uLoop$.

The \emph{multiplicity} of a loop~$\Loop$, denoted by $m_{\Loop}$, is the largest number $m$ such that $\Loop = (\Loop')^{\oplus m}$ for some loop $\Loop'$. We say that $\Loop$
\emph{visits} a directed edge~$e$ $k$ times if $e$ appears $k$ times in $\Loop_{1,|\Loop|}$. Note that for each edge $e$, the number of times $\Loop$ visits $e$ is always divisible by, but not necessarily equal to $m_{\Loop}$.

If $L$ is a set of loops, then we will write $\lambda(L)= \sum_{\Loop \in L} \lambda(\Loop)$ and $\rw(L) = \sum_{\Loop \in L} \rw(\Loop)$. Unnecessary brackets will be omitted in 
this notation, i.e.\ $w \{ \ldots \} = w (\{ \ldots \})$.
Note that since $L$ can be infinite,
it will always be assumed that $\|x\|_{\infty}=\max_{\ue \in E} |x_{\ue}|$ is sufficiently small to
guarantee that all such power series
are absolutely summable. This in particular implies that the order in which the sums are taken is irrelevant. Since the walks are non-backtracking, it is actually enough to take
$\|x\|_{\infty} <1/(\Delta-1)$, where $\Delta$ is the maximal degree of $\FinG$.

\section*{Proof of Theorem~\ref{thm}} \label{sec:KacWard}
Note that since both sides of~\eqref{eq:KacWard} are polynomials, it is enough to prove the desired equality for small $\| x \|_{\infty}$. 

The first two lemmas use only the fact that $\lambda$ is a multiplicative weight. 
\begin{lemma}[Loop expansion of the determinant]\label{lem:loopexpansion}  Let $\Loops$ be the set of all rooted loops. Then,
\begin{align*}
\det(\Id-\Lambda)=\exp\big(-\rw(\Loops)\big).
\end{align*}
\end{lemma}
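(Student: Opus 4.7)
My plan is to derive the identity from the standard $\log \det$ expansion. Specifically, for $\|x\|_\infty$ small enough that the spectral radius of $\Lambda$ is less than $1$ (which the preceding paragraph already ensures), one has the absolutely convergent series
\[
\log \det(\Id - \Lambda) = \operatorname{tr} \log(\Id - \Lambda) = -\sum_{n \geq 1} \frac{\operatorname{tr}(\Lambda^n)}{n}.
\]
The entire content of the lemma then reduces to identifying $\operatorname{tr}(\Lambda^n)$ as a weighted sum over rooted loops of length $n$.

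For the second step, I would expand the matrix product directly:
\[
\operatorname{tr}(\Lambda^n) = \sum_{e} (\Lambda^n)_{e,e} = \sum_{e} \sum_{\Walk_2,\ldots,\Walk_n} \Lambda_{e,\Walk_2}\Lambda_{\Walk_2,\Walk_3}\cdots\Lambda_{\Walk_n,e}.
\]
The nonzero entries of $\Lambda$ are precisely those pairs $(\Walk_i,\Walk_{i+1})$ with $h_{\Walk_i}=t_{\Walk_{i+1}}$ and $\Walk_{i+1}\neq -\Walk_i$, so the surviving tuples are exactly the non-backtracking walks $\Walk \in \Walks_{e,e}$ of length $n$. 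By multiplicativity of $\lambda$ (property~$(i)$), each such tuple contributes $\lambda(\Walk)$. For $n=1$ the sum is empty since the graph is drawn with straight-line edges and therefore has no self-loops, which is consistent with the paper's convention that loops have length at least $2$. For $n\geq 2$ this gives $\operatorname{tr}(\Lambda^n) = \sum_{\Loop \in \Loops,\ |\Loop|=n} \lambda(\Loop)$.

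Finally, I would plug this into the log-det expansion and rearrange:
\[
\log \det(\Id - \Lambda) = -\sum_{n \geq 2} \frac{1}{n}\sum_{\Loop \in \Loops,\ |\Loop|=n} \lambda(\Loop) = -\sum_{\Loop \in \Loops} \frac{\lambda(\Loop)}{|\Loop|} = -\rw(\Loops),
\]
where the interchange of summations is legitimized by the absolute summability assumption on $x$. Exponentiating yields the claimed identity, and since both sides are polynomials (respectively convergent power series) in the variables $x_{\ue}$, this suffices.

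I do not expect any real obstacle here; the only point that requires a sentence of justification is the absolute convergence of the log expansion for small $\|x\|_\infty$, which is the standing assumption stated just before the proof. The computation is otherwise the classical derivation of the Ihara-style loop expansion of a non-backtracking matrix determinant, transported verbatim to the Kac--Ward setting where $\Lambda$ is the weighted non-backtracking transition matrix.
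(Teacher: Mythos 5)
Your proposal is correct and follows essentially the same route as the paper: the paper also identifies $\trace(\Lambda^n)$ with the sum of $\lambda(\Loop)$ over rooted loops of length $n$ and then sums $\trace(\Lambda^n)/n$, merely phrasing the final step via the eigenvalues of $\Lambda$ (i.e.\ $\sum_i\sum_n \alpha_i^n/n = -\ln\prod_i(1-\alpha_i)$) rather than via $\trace\log(\Id-\Lambda)$, which is the same computation. Your explicit remarks on the vanishing of the $n=1$ term and on absolute convergence for small $\|x\|_\infty$ are correct and consistent with the paper's standing assumptions.
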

\begin{proof}
Let $\alpha_i$ be the eigenvalues of $\Lambda$. Then,
\begin{align*}
\rw(\Loops)&= \sum_{n=1}^{\infty} \sum_{|\Loop|=n} \frac{\lambda(\Loop)}{n} = \sum_{n=1}^{\infty} \frac{\trace \Lambda^n}{n} = \sum_i \sum_{n=1}^{\infty} \frac{\alpha_i^n}{n} \\& =-\ln  \prod_i(1-\alpha_i) =  -\ln \det(\Id-\Lambda).	
\qedhere 
\end{align*}
\end{proof}

The next lemma is a variant of Lemma~9.3.2 of \cite{LawlerLimic}, which is used to prove the exponential formula for the law of the loop-erased walk.
\begin{lemma}[Generic cancellations] \label{lem:generic} 
Let $\Loops^1_e$ be the set of loops rooted at $e$ which visit $e$ only once and do not visit $-e$.
Then,
\begin{align*} 
\exp(-\rw\{\Loop \textnormal{ visits } e \textnormal{ and not } -e\})= 1-\lambda( \Loops^1_e) .
\end{align*}
In particular, the left-hand is linear in $x_{\ue}$.\end{lemma}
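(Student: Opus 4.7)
The plan is to expand the exponent by grouping rooted loops according to how many times they traverse~$e$, reducing the logarithmic series to a geometric one in $\lambda(\Loops^1_e)$. Concretely, let $R_k$ denote the set of rooted loops that visit $e$ exactly $k$ times and never visit $-e$, and let $\Loops^k_e\subset R_k$ be the further subset of loops rooted at~$e$ (so $\Loops^1_e$ matches the paper's notation). I would first show that $\rw(R_k)=\frac{1}{k}\lambda(\Loops^1_e)^k$ for every $k\ge 1$; summing then gives $\rw\{\Loop \text{ visits } e \text{ and not } -e\}=\sum_{k\ge 1}\lambda(\Loops^1_e)^k/k=-\ln(1-\lambda(\Loops^1_e))$, and exponentiation yields the claimed identity. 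Absolute summability, guaranteed by the assumed smallness of $\|x\|_\infty$, justifies the rearrangements.

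The input supplying the $k$th power is multiplicativity, property~(i) of $\lambda$. Every $\Loop\in\Loops^k_e$ decomposes uniquely as the concatenation of $k$ segments that are themselves loops in $\Loops^1_e$, namely the stretches between consecutive visits to $e$; conversely, any ordered $k$-tuple in $\Loops^1_e$ concatenates to an element of $\Loops^k_e$. Hence $\lambda(\Loops^k_e)=\lambda(\Loops^1_e)^k$.

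The remaining step, and the main delicate point, is the cyclic-shift count that relates $R_k$ to $\Loops^k_e$. I would fix an unrooted class $\uLoop$ of length $n$ and multiplicity~$m$ containing representatives in $R_k$; by periodicity, $m$ divides both $n$ and $k$. The class has exactly $n/m$ distinct rooted representatives in total, of which exactly $k/m$ start at~$e$ (one for each visit to $e$ in a single period of length $n/m$). Therefore the contribution of $\uLoop$ to $\rw(R_k)$ is $(n/m)\cdot\lambda(\uLoop)/n=\lambda(\uLoop)/m$, while its contribution to $\lambda(\Loops^k_e)$ is $(k/m)\lambda(\uLoop)$; the ratio is $1/k$ independently of~$m$. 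Summing over classes yields $\rw(R_k)=\lambda(\Loops^k_e)/k$, as required. The clean cancellation of the multiplicity factor is what allows the series to produce $-\ln(1-\lambda(\Loops^1_e))$, and keeping track of orbits when the loop is a proper power of a shorter one is the subtle bookkeeping of the argument.

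The linearity assertion is then free: any $\Loop\in\Loops^1_e$ traverses the undirected edge $\ue$ exactly once (once via $e$, never via $-e$), so $\lambda(\Loop)$ is linear in $x_{\ue}$, as is $\lambda(\Loops^1_e)$ and hence $1-\lambda(\Loops^1_e)$.
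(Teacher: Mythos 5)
Your proposal is correct and follows essentially the same route as the paper: the identity $\rw(R_k)=\lambda(\Loops^1_e)^k/k$ via the unrooted-class count $(n/m$ representatives, $k/m$ rooted at $e)$ together with the unique decomposition into first-return loops is exactly the paper's argument, merely stratified by $k$ before summing the logarithmic series rather than after.
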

\begin{proof}
Let $\Loops_e$ be the set of loops which visit $e$ and do not visit $-e$, and let $\Loops^*_e \subset \Loops_e$ be the set of loops {rooted at} $e$. Let $\Loops^{\circ}_e$ be the set of 
unrooted loops which have a representative in~$\Loops^*_e$. Note that $\Loops_e$ is the set of all representatives of the unrooted loops from $\Loops^{\circ}_e$.

Let $k_{\Loop}$ be the number of times  $\Loop$ visits $e$.
Observe that the number of all representatives of $\uLoop \in \Loops^{\circ}_e$ is $|\uLoop|/m_{\uLoop}$, and the number of its representatives in $\Loops^*_e$ is
${k_{\uLoop}}/{m_{\uLoop}}$.
Grouping the loops by their unrooted versions, the negated logarithm of the left-hand side of the desired equality becomes
\begin{align*}
\sum_{\Loop\in \Loops_e}\frac{\lambda(\Loop)}{|\Loop|} = \sum_{ \Loop \in \Loops_e } \frac{\lambda(\Loop)}{m_{\Loop}}\frac{m_{\Loop}}{|\Loop|} =
 \sum_{ \uLoop \in \Loops^{\circ}_e } \frac{\lambda(\uLoop)}{m_{\uLoop}} = \sum_{ \Loop \in \Loops^*_e } \frac{\lambda(\Loop)}{m_{\Loop}}\frac{m_{\Loop}}{k_{\Loop}}  =
\sum_{\Loop \in \Loops^*_e}\frac{\lambda(\Loop)} {k_{\Loop}}.
\end{align*}
Note that each $\Loop \in \Loops^*_e$ has a unique representation $\Loop =\Loop^1 \oplus\Loop^2 \oplus \cdots \oplus \Loop^k$,
where $ \Loop^i \in \Loops^1_e$. It follows that $\Loops^*_e$ is a disjoint union of $(\Loops^1_e)^{\oplus k}$ taken over all~$k$.
Therefore, by multiplicativity,
\begin{align}
  \sum_{\Loop \in \Loops^*_e}\frac{\lambda(\Loop)}{k_{\Loop}} =  \sum_{k=1}^{\infty}\sum_{\Loop \in (\Loops^1_e)^{\oplus k}} \frac{\lambda(\Loop)}k =\nonumber \sum_{k=1}^{\infty}\frac{\lambda (\Loops^1_e)^k}{k} & = 
  -\ln(1-\lambda( \Loops^1_e)). \qedhere
\end{align}
\end{proof}

The next lemma is the only place where property~\eqref{prop2} is used.

\begin{lemma}[Specific cancellations] \label{lem:specific} For any directed edge $e$,
\[
\rw\{\Loop \textnormal{ visits } e \textnormal{ and }  -e\}=0.
\]
\end{lemma}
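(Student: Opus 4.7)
The plan is to construct a fixed-point-free, length-preserving involution $\iota$ on the set of rooted loops visiting both $e$ and $-e$, with $\lambda(\iota(\Loop))=-\lambda(\Loop)$; pairing each loop with its image then kills the sum. The involution will reverse a suitable subwalk lying in $\Walks_{e',-e'}$, and the sign change comes directly from property~\eqref{prop2}.

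For such a loop $\Loop$, let $p$ be the smallest index in $\{1,\dots,|\Loop|\}$ with $\Loop_p\in\{e,-e\}$, and let $q$ be the \emph{largest} index in $\{1,\dots,|\Loop|\}$ with $\Loop_q=-\Loop_p$. Both exist, and $p<q$. Define
\[
\iota(\Loop)=\Loop_{1,p}\oplus(\Loop_{p,q})^{-1}\oplus\Loop_{q,|\Loop|+1}.
\]
Since $\Loop_{p,q}\in\Walks_{\Loop_p,-\Loop_p}$, its reversal again lies in $\Walks_{\Loop_p,-\Loop_p}$, so the concatenation is a legitimate non-backtracking walk: the edges at positions $p$ and $q$ are unchanged, taking care of the seams, and the interior inherits non-backtracking from $\Loop$. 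Thus $\iota(\Loop)$ is a rooted loop of the same length and root.

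The key check is that $\iota$ is an involution. Outside $[p,q]$ nothing changes, and $\iota(\Loop)_p=-\Loop_q=\Loop_p$, $\iota(\Loop)_q=-\Loop_p=\Loop_q$, so the first visit of $\pm e$ in $\iota(\Loop)$ is still at position $p$ with edge $\Loop_p$. For the last visit of $-\Loop_p$: beyond $q$ nothing has changed and no such visits occur by maximality of $q$; within $[p,q]$ the edge at position $j$ is $-\Loop_{p+q-j}$, which equals $-\Loop_p$ iff $\Loop_{p+q-j}=\Loop_p$, and the largest such $j$ corresponds to the smallest $k\in[p,q]$ with $\Loop_k=\Loop_p$, namely $k=p$, giving $j=q$. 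Hence the indices recomputed from $\iota(\Loop)$ match those of $\Loop$, a second reversal returns $\Loop$, and $p<q$ excludes fixed points.

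Finally, multiplicativity and \eqref{prop2} applied to $\Loop_{p,q}\in\Walks_{\Loop_p,-\Loop_p}$ give $\lambda(\iota(\Loop))=-\lambda(\Loop)$, and $|\iota(\Loop)|=|\Loop|$, so $\rw(\iota(\Loop))=-\rw(\Loop)$ and the pairs cancel. The one subtle point I expect to be the main obstacle is choosing the right indices: the naive ``first $\pm e$, next opposite thereafter'' is not self-inverse, because the reversed middle can introduce earlier copies of $-\Loop_p$ that shift the second index when the map is iterated. Pairing the \emph{first} $\pm e$ with the \emph{last} opposite edge is exactly what neutralises this and makes the construction an involution.
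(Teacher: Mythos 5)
Your construction is exactly the paper's: take the first index $p$ at which $\pm e$ occurs, the last index $q$ at which $-\Loop_p$ occurs, reverse the intermediate subwalk, and invoke multiplicativity together with property~\eqref{prop2} to flip the sign. The proposal is correct and even spells out the involution check that the paper leaves to the reader.
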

\begin{proof}
Take $\Loop$ which visits both $e$ and $-e$, and let~$l$ be the smallest index such that $\Loop_l=e$ or $\Loop_l=-e$.
Let $m$ be the largest index such that $\Loop_m=-\Loop_l$. 
Consider the loop $\Loop'=\Loop_{1,l} \oplus (\Loop_{l,m})^{-1} \oplus \Loop_{m,|\Loop|+1}$.
From multiplicativity and property~\eqref{prop2} it follows
that ${w(\Loop')}=-{w(\Loop)}$.
It is now enough to notice that the map $\Loop \mapsto \Loop'$ is an involution of the set of loops which visit both $e$ and~$-e$.
\end{proof}

A set $C\subset E$ is called a \emph{cycle} if each vertex of the unique non-trivial connected component of $(V,C)$ has exactly two neighbors.
We first prove the main theorem in the case when $\FinG$ is \emph{trivalent}, by which we mean that all vertices of $\FinG$ have at most three neighbors. 
The only property of trivalent graphs used here is that their even subgraphs are collections of disjoint cycles (see Figure~\ref{fig}). 
The general case is then reduced to the trivalent one by 
a vertex decoration method.

\subsubsection*{The trivalent case}
By Lemma~\ref{lem:loopexpansion}, for any directed edge $e$,
\begin{align*}
\det(\Id -\Lambda) = A\exp(-\rw\{\Loop \text{ visits } e \text{ or } -e\}),
\end{align*}
where $A$ does not depend on $x_{\ue}$. By Lemma~\ref{lem:specific} and property~\eqref{prop3}, the signed measure of the set of loops which visit $e$ or $-e$ equals
\begin{align*}
&\rw\{ \Loop \textnormal{ visits } e \textnormal{ and not } -e \} +\rw\{ \Loop \textnormal{ visits } -e \textnormal{ and not } e \} \\&+\rw\{\Loop \textnormal{ visits } e \text{ and }  -e\} =2\rw\{ \Loop \textnormal{ visits } e \textnormal{ and not } -e \}.\nonumber
\end{align*}
 Using Lemma~\ref{lem:generic} we conclude that ${\det}(\Id -\Lambda)$ is a square of a linear polynomial in $x_{\ue}$. Since this holds for each undirected edge $\ue$,
  ${\det}(\Id -\Lambda)$ is a square of a multi-linear polynomial in $x$.
 
It is now enough to prove that the coefficients of the polynomials given by the square roots of the left- and right-hand side of \eqref{eq:KacWard} are equal.
Note that to each cycle $C$ containing $n$ edges, there naturally correspond $2n$ self-avoiding loops of length $n$
which traverse the cycle (including two orientations). 
We will write $\polsim$ if the monomials which are multi-linear in $x$ are the same in both expressions.
Using Lemma~\ref{lem:loopexpansion} and property~\eqref{prop4}, we have
\begin{align*}
{\det}^{\frac12}&(\Id -\Lambda)= \exp\Big(-\sum_{\Loop } \frac{\rw(\Loop)}2\Big) \polsim \exp\Big(-\sum_{\Loop \textnormal{ self-av.}} \frac{\rw(\Loop)}2\Big)
\\&= \exp\Big(\sum_{\Loop \textnormal{ self-av.}} \frac{x(\Loop)}{2|\Loop|}\Big)
= \exp\Big(\sum_{C \textnormal{ cycle }}\prod_{\ue \in C} x_{\ue}\Big) \\ & \polsim  \sum_{k=0}^{\infty} \mathop{\sum_{\{C^1,\ldots,C^k\}}}_{C^i \textnormal{ disjoint }}\prod_{\ue \in \bigcup_{i=1}^kC^i} x_{\ue} =
\sum_{H \text{ even} } \hspace{0.1cm} \prod_{\ue \in H} x_{\ue}.
\end{align*}
Note that the sum in the last line is taken over unordered collections of cycles so the factor $1/k!$ coming from the exponential cancels out.

\subsubsection*{The non-trivalent case}
The idea is to construct a trivalent graph $\FinG^{\dagger}$
which has the same generating functions of even subgraphs and loops as $\FinG$.
To this end, take a vertex $v\in V$ of degree $k > 3$. Let $u^1,u^2,\ldots,u^k$ be a clockwise ordering of the neighbours of $v$.
Consider a decoration of $\FinG$ where $v$ is replaced by $k$ new vertices $v^1,v^2,\ldots,v^k$, the edges $\{u^i,v \}$ are replaced by new edges $\{u^i,v^i\}$,
and new edges $\{v^i,v^{i+1}\}$ are added for $i=1,2,\ldots, k-1$  (see Figure~\ref{fig:vertexmap}). Note that the edge $\{ v^k, v^1\}$ is not added. The edges $\{ u^i, v^i\}$ inherit the weight from $\{u^i,v\}$ and all the
edges $\{v^i,v^{i+1}\}$ get weight $1$. If one repeats this procedure for every vertex with more than three neighbors, one obtains a trivalent graph $\FinG^{\dagger}$. Note that there is a bijection between the edges of $\FinG$ and the edges of $\FinG^{\dagger}$ which inherited the weights from $\FinG$.

It is easy to see that there is a weight-preserving bijection between the even subgraphs of $\FinG$ and $\FinG^{\dagger}$. For an even subgraph of $\FinG$, it is enough to take the corresponding edges in~$\FinG^{\dagger}$ and connect them in a unique way using the edges with weight $1$. Uniqueness is guaranteed by the construction of~$\FinG^{\dagger}$. There is also a weight-preserving bijection between the loops in $\FinG$ and~$\FinG^{\dagger}$. For a loop $\Loop$, we can construct the corresponding loop $\Loop^{\dagger}$ step by step.
If $\Loop$ makes a step from $(u^i,v)$ to $(v,u^j)$, then $\Loop^{\dagger}$ traverses the unique path starting at $(u^i,v^i)$, then following the edges of weight $1$, and ending at $(v^{j},u^j)$. 
It is clear that $x(\Loop)=x(\Loop^{\dagger})$, and one can check that $\alpha (\Loop)=\alpha (\Loop^{\dagger})$.

It is now enough to use Lemma~\ref{lem:loopexpansion} for $\FinG$, and pass to $\FinG^{\dagger}$ without changing the loop weights, then use the identity 
$\exp(-w(\Loops)) = \sum_{H \text{ even}} \prod_{\ue \in H} x_{\ue}$ for $\FinG^{\dagger}$, and go back to $\FinG$ in the even subgraph generating function.
\begin{figure}
		\begin{center}
			\includegraphics{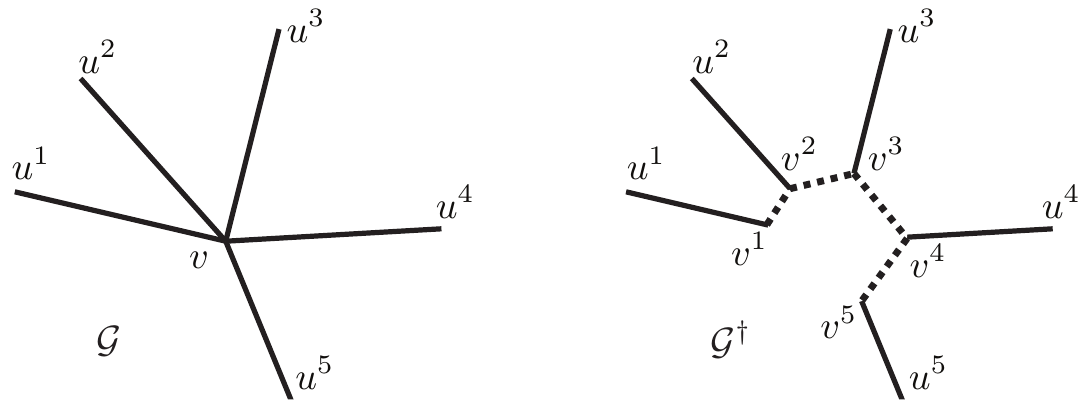}
		\end{center}
		\caption{The vertex decoration used to obtain a trivalent graph}
		\label{fig:vertexmap}
\end{figure}

\bibliographystyle{amsplain}
\bibliography{ising}
\end{document}